\newcommand{\p}{{\partial}}
\def\d{{\partial}}
\newcommand{\ZZ}{\mathbb{Z}}
\newcommand{\bt}{{\bf t}}
\newtheorem{theorem}{Theorem}[section]
\newtheorem{proposition}[theorem]{Proposition}
\def\&{\vspace{-5pt}&}
\numberwithin{equation}{section}
\begin{document}

\title{Integrable systems associated to open extensions of type A and D Dubrovin--Frobenius manifolds}

\author{Alexey Basalaev}
\address{A. Basalaev:\newline Faculty of Mathematics, National Research University Higher School of Economics, Usacheva str., 6, 119048 Moscow, Russian Federation, and \newline
Skolkovo Institute of Science and Technology, Nobelya str., 3, 121205 Moscow, Russian Federation}
\email{a.basalaev@skoltech.ru}

\date{\today}

 \begin{abstract}
    We investigate the solutions to open WDVV equation, associated to type A and D Dubrovin--Frobenius manifolds. We show that these solutions satisfy some stabilization condition and associate to both of them the systems of commuting PDEs. In the type A we show that the system of PDEs constructed coincides with the dispersionless modifiled KP hierarchy written in the Fay form.
 \end{abstract}
 \maketitle
\section{Introduction}

Given a Dubrovin--Frobenius manifold, there are several constructions of an integrable system associated to it (cf. \cite{DZ}, \cite{GM} and \cite{B2}). 
In particular, it was proved that for type $A_N$ and $D_N$ Dubrovin--Frobenius manifolds all these constructions provide the corresponding Drinfeld--Sokolov hierarchies (cf. \cite{LRZ}). The $A_N$ Drinfeld--Sokolov hierarchy can be defined as the reductions of the KP and type $D_N$ Drinfeld--Sokolov hierarchy can be defined as a 1--component reduction of the 2--component BKP hierarchies.

In \cite{BDbN} the authors proposed a new construction of a system of commuting PDEs associated to the family of $A$--type and $D$--type Dubrovin--Frobenius manifolds. 
One of the main properties of this construction was the following: consistency of the system of PDEs constructed was derived from WDVV equation on the potential of a Dubrovin--Frobenius manifold.
Comparing to the previous approaches this construction associated dispersionless KP hierarchy and dispersionless 1-component reduced 2--component BKP hierarchy in $A$ and $D$ types respectively. 

\subsection{Open WDVV equation}
Motivated by studies of the open Gromov--Witten theories the new system of PDEs called \textit{open WDVV} was introduced in \cite{HS12}. Given a solution $F^c = F^c(t_1,\dots,t_N)$ to WDVV equation with the metric $\eta$, open WDVV equation on a function $F^o = F^o(t_0,t_1,\dots,t_N)$ reads
\begin{align}
\label{eq:open WDVV}
\frac{\d^3F^c}{\d t_\alpha\d t_\beta\d t_\mu}\eta^{\mu\nu}\frac{\d^2F^o}{\d t_\nu\d t_\gamma}+\frac{\d^2F^o}{\d t_\alpha\d t_\beta}\frac{\d^2F^o}{\d t_0\d t_\gamma}=\frac{\d^3F^c}{\d t_\gamma\d t_\beta\d t_\mu}\eta^{\mu\nu}\frac{\d^2F^o}{\d t_\nu\d t_\alpha}+\frac{\d^2F^o}{\d t_\gamma\d t_\beta}\frac{\d^2F^o}{\d t_0\d t_\alpha},
\end{align}
for any fixed $0 \le \alpha,\beta,\gamma \le N$. 

Similarly to ``classical'' WDVV equation, open WDVV equation is associativity equation of the product defined by
\[
    \frac{\p}{\p t_\alpha} \circ \frac{\p}{\p t_\beta} = \sum_{\gamma,\delta=1}^{N} \frac{\p^3 F^c}{\p t_\alpha \p t_\beta \p t_\gamma} \eta^{\gamma\delta} \frac{\p}{\p t_\delta} + \frac{\p^2 F^o}{\p t_\alpha \p t_\beta} \frac{\p}{\p t_0}, \quad 0 \le \alpha,\beta \le N.
\]
From this point of view the function $F^o$, solving open WDVV equation, defined an \textit{open extensions} of a Dubrovin--Frobenius manifold given by $F^c$.

For $F^c$ being Dubrovin--Frobenius potential of type $A_N$ or $D_N$ the solutions to open WDVV equation were investigated in \cite{BB2}. In particular, for the $A_N$ and $D_N$ Dubrovin--Frobenius potentials $F^c_{A_N}$ and $F^c_{D_N}$ the authors constructed the functions $F^o_{A_N}$ and $F^o_{D_N}$, being solutions to open WDVV equation. The function $F^o_{A_N}$ appeared to be polynomial in $t_0,t_1,\dots,t_N$ and $F^o_{D_N}$ polynomial in $t_1,\dots,t_N$, but Laurent polynomial in $t_0$.

The integrable hierarchies associated to the solutions of open WDVV equations were investigated in \cite{BCT1, BCT2} and \cite{A}. In all these references only $A_N$ type was assumed. 

\subsection{System of commuting PDEs}
Extending the approach of \cite{BDbN} we associate to the families $\lbrace (F^c_{A_N},F^o_{A_N}) \rbrace_{N \ge 1}$ and $\lbrace(F^c_{D_N},F^o_{D_N})\rbrace_{N\ge 4}$ the systems of commuting PDEs. Like in loc.cit. the first step on this way is the following stabilization result.

In the A case for any $N < M$ and $ 1\le \alpha \le N$ we have
    \[
      \frac{\p F_{A_N}^o}{\p t_\alpha} \mid_{t_{N+1-\beta} = s_\beta} \ = \frac{\p F_{A_M}^o}{\p t_\alpha} \mid_{t_{M+1-\beta} = s_\beta},
    \]
    assumed as polynomials in $t_0,s_1,\dots,s_{M+1}$. In the D case for any $4 \le N < M$ and $ 1\le \alpha < N$ we have
    \begin{align*}
      \frac{\p F_{D_N}^o}{\p t_\alpha} \mid_{t_{N-\beta} = s_\beta, \ t_{N} = \bar s_{1}} \ = \frac{\p F_{D_M}^o}{\p t_\alpha} \mid_{t_{M-\beta} = s_\beta, \ t_{M} = \bar s_{1}},
      \\
      \frac{\p F_{D_N}^o}{\p t_{N}} \mid_{t_{N-\beta} = s_\beta, \ t_{N} = \bar s_{1}} \ = \frac{\p F_{D_M}^o}{\p t_{M}} \mid_{t_{M-\beta} = s_\beta, \ t_{M} = \bar s_{1}},
    \end{align*}
    assumed as polynomials in $\bar s_1,s_1,\dots,s_{M+1}$ and Laurent polynomials in $t_0$.

Both systems of equations that we write express the higher derivatives $\p_\alpha\p_\beta f$ of a solution $f$ via $\p_1\p_0 f, \p_1\p_1 f, \p_1\p_2 f,\dots$ that may be viewed as ``initial data''.

For $A$ type this is the system of PDEs on a function $f = f(t_0,t_1,t_2,\dots)$
\begin{align*}
    \p_\alpha\p_\beta f &= \frac{\p^2 F^c_{A_\kappa}}{\p t_\alpha \p t_\beta} \mid_{ t_{\kappa+1-\gamma} = \p_1 \p_\gamma f}, \qquad \kappa = \alpha+\beta+1.
    \\
    \p_0 \p_\alpha f &= \frac{\p F^o_{A_\kappa}}{\p t_\alpha} \mid_{ t_{\kappa+1-\gamma} = \p_1 \p_\gamma f, \ t_0 = \p_1 \p_0 f}
\end{align*}
where the abbreviate $\p_\alpha = \p / \p t_\alpha$. We show in Section~\ref{section: A} that this system is well--defined and consistent. We also prove in Theorem~\ref{theorem: dispersionless mKP} that this system of equations coincides with the dispersionless modified KP hierarchy written in the Fay form. We also provide the construction extending new type A system to the full mKP hierarchy.

For D type we introduce the system of PDEs on $f = f(\bar t_1, t_0,t_1,\dots)$
\begin{align*}
    \p_\alpha\p_\beta f &= \frac{\p^2 F^c_{D_\kappa}}{\p t_\alpha \p t_\beta} \mid_{ t_{\kappa-\gamma} = \p_1 \p_\gamma f, \ t_\kappa = \p_1 \bar\p_1 f}, \qquad \kappa = \alpha+\beta-1.
    \\
    \p_\alpha\bar\p_1 f &= \frac{\p^2 F^c_{D_\kappa}}{\p t_\alpha \p t_\kappa} \mid_{ t_{\kappa-\gamma} = \p_1 \p_\gamma f, \ t_\kappa = \p_1 \bar\p_1 f},
    \\
    \p_0 \p_\alpha f &= \frac{\p F^o_{D_\kappa}}{\p t_\alpha} \mid_{ t_{\kappa-\gamma} = \p_1 \p_\gamma f, \ t_\kappa = \p_1 \bar\p_1 f, \ t_0 = \p_1 \p_0 f}, \qquad \kappa = \alpha+1,
    \\
    \p_0 \bar \p_1 f \p_0 \p_1 f &= \p_1\bar\p_1 f,
\end{align*}
where we abbreviate $\bar\p_1 = \p/ \p \bar t_1$. It is proved in Theorem~\ref{theorem: D type} that this system of equations is well-defined and consistent.

Like in \cite{BDbN} we derive consistency of both A and D type PDEs from open WDVV equation above. Comparing to the previously cited references our work is the first example of the system of PDEs constructed by the open WDVV solutions.

Following the relation of $D_N$ type Drinfeld--Sokolov hierarchy and also D type hierarchy of \cite{BDbN} to the 2--component BKP hierarchy, one would expect that our D type system is connected to the modified BKP hierarchy. However the modified version of BKP hierarchy is not yet fully settled.

\noindent{\bf Acknowledgements.}
The author acknowledges partial support by International Laboratory of Cluster Geometry HSE University (RF Government grant, agreement no. 075-15-2021-608 from 08.06.2021).

\section{Open potentials of type A and D}

The Dubrovin--Frobenius potentials $F_{A_N}^c$ and $F^c_{D_N}$ were first derived by B.Dubrovin via the geometry of the respective Coxeter groups (cf. \cite{D1}). Since then these potentials were constructed in the several different ways by the other authors. We neither give the construction of these Dubrovin--Frobenius manifolds nor provide the expansion of the respective potentials referencing the reader to \cite{BDbN}. 

\subsection{$A_N$ open potential}
The genus zero $A_N$ open potential $F^o_{A_N}$ was first found in \cite{BCT2}. It is a polynomial in $t_1,\dots,t_N$ and $t_0$ defined by
\begin{equation*}
\left.\frac{\d^{m+k} F^o_{A_N}}{\d t_{\alpha_1}\ldots\d t_{\alpha_m}(\d t_0)^k}\right|_{t_\ast=0}
=
\begin{cases}
(m+k-2)!,&\text{if $\sum_{i=1}^m(N+2-\alpha_i)+k=N+2$},\\
0,&\text{otherwise}.
\end{cases}
\end{equation*}
It is also connected to the $A_N$ unfolding coordinates by the following identity (see \cite{B3})
\begin{equation}\label{eq: open potential derivative}
    \frac{\p F^o_{A_N}}{\p t_0} = \frac{t_0^{N+1}}{N+1} + \sum_{k=1}^N t_0^{k-1} v_k^{\mathrm{A}}(t_1,\dots,t_N)
\end{equation}
where $v_\bullet^{\mathrm{A}}$ are defined by
\begin{align}\label{eq: An essential coordinate via flat}
     v_\gamma^{\mathrm{A}} &=  \sum_{\substack{\alpha_1,\dots,\alpha_N \ge 0 \\ \sum_{k=1}^{N} (N+2-k) \alpha_k = N+2 - \gamma}} \frac{(|\alpha|+\gamma-2)!}{(\gamma-1)!} \prod _{k=1}^{N} \frac{t_k^{\alpha _k}}{\alpha _k!}
\end{align}
for $|\alpha| = \sum_{k=1}^{N} \alpha_k$.

It follows immediately from the definition above that $F^o_{A_N}$ satisfies the following quasihomogeneity condition
\begin{equation}\label{eq: A open potential quasihomogeneity}
    \left( \sum_{k=1}^N (N+2-k) t_k \frac{\p}{\p t_k} + t_0 \frac{\p }{\p t_0}\right) F^o_{A_N} = (N+2) F^o_{A_N}. 
\end{equation}

It was shown in \cite{BB2} that $F^o_{A_N}$ is the only polynomial satisfying this quasihomogeneity condition, s.t. $(F^c_{A_N},F^o_{A_N})$ is a solution to open WDVV equation and $\p_1\p_0 F^o_{A_N} = 1, \p_1\p_\alpha F^o_{A_N} = 0$ for $1 \le \alpha \le N$.

\subsection{$D_N$ open potential}
The genus zero $D_N$ open potential $F^o_{D_N}$ is a polynomial in $t_1,\dots,t_N$ and Laurent polynomial in $t_0$ defined by (cf. \cite[Section 5.2]{BB2})
\begin{gather*}
F^o_{D_N}:=\frac{t_0^{2 N-1}}{2^{N-2}(2 N-1) (2N-2)} + \frac{t_N^2}{2 t_0} + 
\sum_{k=1}^{N-1} \frac{v_k^{\mathrm{D}} t_0^{2 k-1}}{2^{k-1}(2 k-1)}
\end{gather*}
with the functions $v_\bullet^{\mathrm{D}}$ given by (see \cite[Corollary 4.6]{BDbN})
\begin{align}\label{eq: Dn essential coordinate via flat}
    v_b^{\mathrm{D}} &=  \sum_{\substack{\alpha_1,\dots,\alpha_{N-1} \ge 0 \\ \sum_{k=1}^{N-1} (N-k) \alpha_k = N - b}} \frac{(|\alpha|+2b-3)!}{(2b-2)!} \prod _{k=1}^{N-1} \frac{t_k^{\alpha _k}}{\alpha _k!}.
\end{align}

The Laurent polynomial $F^o_{D_N}$ satisfies the following quasihomogeneity condition
\begin{equation}\label{eq: D open potential quasihomogeneity}
    \left( \sum_{k=1}^{N-1} 2(N-k) t_k \frac{\p}{\p t_k} + N t_N \frac{\p}{\p t_N} + t_0 \frac{\p }{\p t_0}\right) F^o_{D_N} = (2N-1) F^o_{D_N}. 
\end{equation}

It is immediate to see that the variable $t_N$ plays a special role in $F^o_{D_N}$. In particular, the functions $v_b^{\mathrm{D}}$ do not depend on $t_N$ and the only non--polynomial summand of $F^o_{D_N}$ is at the same time the only appearance of $t_N$ in the open potential.
This variable is also special for $F^c_{D_N}$. In particular, we have (see \cite[Section 3.3]{BDbN})
\begin{equation}\label{eq: DN^c expression}
    F^c_{D_N} = \frac{1}{2}t_1^2t_{N-1} +\frac{1}{2}t_1 t_{N}^2 + \frac{1}{2} t_1 \sum_{\alpha,\beta=2}^{N-1} t_\alpha t_{N-\alpha} + \phi(t_2,\dots,t_{N-1}) + v_1^{\mathrm{D}} \cdot \frac{t_N^2}{2},
\end{equation}
for some polynomial $\phi$ that does not depend on $t_N$ again. This speciality of $t_N$ will result in the special form of the PDEs that we obtain in the D case.

\section{Dispersionless open--closed system of type A}\label{section: A}
The following stabilization condition was proved in \cite[Theorem 4.1]{BDbN}.
For any $M > N \ge 1$ and $\alpha,\beta$, s.t. $1 \le \alpha,\beta \le N$, $\alpha + \beta \le N+1$ we have
\[
    \left.\frac{\p^2 F^c_{A_{N}}}{\p t_\alpha \p t_\beta} \right|_{\forall \gamma\; t_{N+1-\gamma} = s_\gamma} = \left.\frac{\p^2 F^c_{A_M}}{\p t_\alpha \p t_\beta} \right|_{\forall \gamma\; t_{M + 1- \gamma} = s_\gamma},
\]
understood as an equality of polynomials in $s_\bullet$.
We show that it also holds for $F^o_{A_N}$.

\begin{proposition}
    For any $N < M$ and $ 1\le \alpha \le N$ we have
    \[
      \frac{\p F_{A_N}^o}{\p t_\alpha} \mid_{t_{N+1-\beta} = s_\beta} = \frac{\p F_{A_M}^o}{\p t_\alpha} \mid_{t_{M+1-\beta} = s_\beta},
    \]
    assumed as polynomials in $t_0,s_1,\dots,s_{M+1}$.
\end{proposition}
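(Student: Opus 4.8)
The plan is to obtain a closed form for $\p F^o_{A_N}/\p t_\alpha$ directly from the prescribed Taylor coefficients of $F^o_{A_N}$, and then to observe that after the substitution $t_{N+1-\beta}=s_\beta$ the answer no longer mentions $N$.

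First I would write out the monomial expansion of $F^o_{A_N}$. Since all of its mixed partials at $t_\ast=0$ are prescribed, $F^o_{A_N}$ equals the finite sum
\[
F^o_{A_N} \;=\; \sum_{\substack{\gamma_1,\dots,\gamma_N,\,k\,\ge\,0\\ \sum_{j=1}^N (N+2-j)\gamma_j + k = N+2}} \frac{(|\gamma|+k-2)!}{k!\,\prod_{j=1}^N \gamma_j!}\; t_0^{k}\prod_{j=1}^N t_j^{\gamma_j},\qquad |\gamma|:=\sum_{j=1}^N\gamma_j,
\]
where every contributing term automatically satisfies $|\gamma|+k\ge 2$, so all factorials are defined. (Equivalently one may integrate \eqref{eq: open potential derivative}--\eqref{eq: An essential coordinate via flat} once in $t_0$ and fix the $t_0$-independent integration constant from the $t_0=0$ Taylor data; the bookkeeping is the same.)

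Next I would differentiate in $t_\alpha$ and shift the summation index $\gamma_\alpha\mapsto\gamma_\alpha+1$. The factor $\gamma_\alpha$ coming out of the derivative cancels one factor of $(\gamma_\alpha+1)!$, while the homogeneity constraint drops from $N+2$ to $\alpha$, yielding
\[
\frac{\p F^o_{A_N}}{\p t_\alpha} \;=\; \sum_{\substack{\gamma_1,\dots,\gamma_N,\,k\,\ge\,0\\ \sum_{j=1}^N (N+2-j)\gamma_j + k = \alpha}} \frac{(|\gamma|+k-1)!}{k!\,\prod_{j=1}^N \gamma_j!}\; t_0^{k}\prod_{j=1}^N t_j^{\gamma_j}.
\]
The key point concerns only the index set of this sum: since $N+2-j\ge 2$ for every $1\le j\le N$ whereas $\alpha\le N$, any $j$ with $\gamma_j\ge 1$ must satisfy $N+2-j\le\alpha$, i.e.\ $j\in\{N+2-\alpha,\dots,N\}$ --- exactly the variables $t_{N+2-\alpha},\dots,t_N$ that the substitution carries to $s_{\alpha-1},\dots,s_1$. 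Setting $j=N+1-\beta$ (so that $N+2-j=\beta+1$, and writing $\delta_\beta:=\gamma_{N+1-\beta}$, with $\delta_\beta=0$ forced once $\beta\ge\alpha$) turns the last display into
\[
\frac{\p F^o_{A_N}}{\p t_\alpha}\Big|_{t_{N+1-\beta}=s_\beta} \;=\; \sum_{\substack{\delta_1,\dots,\delta_{\alpha-1},\,k\,\ge\,0\\ \sum_{\beta=1}^{\alpha-1}(\beta+1)\delta_\beta + k = \alpha}} \frac{(|\delta|+k-1)!}{k!\,\prod_{\beta=1}^{\alpha-1}\delta_\beta!}\; t_0^{k}\prod_{\beta=1}^{\alpha-1} s_\beta^{\delta_\beta},
\]
whose right-hand side involves no $N$; it is a polynomial in $t_0,s_1,\dots,s_{\alpha-1}$. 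Since $\alpha\le N<M$, the same polynomial is produced from $A_N$ and from $A_M$, which is the asserted equality of polynomials in $t_0,s_1,\dots,s_{M+1}$. For $\alpha=1$ the constraint forces $\gamma=0$, $k=1$, so $\p_1 F^o_{A_N}=t_0$ for every $N$, covering that case too.

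I do not expect a genuine obstacle: the argument is a bounded computation, and the only delicate point is the combinatorial bookkeeping --- checking that the shift $\gamma_\alpha\mapsto\gamma_\alpha+1$ simultaneously produces the factor $1/\gamma_\alpha!$ and the replacement of the constraint by $\sum_j(N+2-j)\gamma_j+k=\alpha$, and that the change of index $j\leftrightarrow N+1-\beta$ is applied uniformly to the constraint, the monomial, and the factorials. One should also dispose of the harmless extreme cases, e.g.\ the term $\gamma=0$ contributing $t_0^\alpha/\alpha$ and the fact that $|\gamma|+k-1\ge 1$ whenever $\alpha\ge 2$.
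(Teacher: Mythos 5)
Your computation is correct: the monomial expansion you write down is exactly what the defining Taylor data of $F^o_{A_N}$ dictates, the index shift $\gamma_\alpha\mapsto\gamma_\alpha+1$ is handled properly, and the observation that after reindexing by $\beta=N+1-j$ the weight becomes $\beta+1$ (so the constraint, the coefficients and the surviving variables $s_1,\dots,s_{\alpha-1}$ no longer involve $N$) is precisely the mechanism behind the stabilization. This is essentially the paper's argument, with one difference in packaging: the paper treats the $t_0$-free part exactly as you do, by checking that the mixed partials $\p^{m+1}F^o_{A_\kappa}/\p t_\alpha\p t_{\kappa+1-\beta_1}\cdots\p t_{\kappa+1-\beta_m}$ at $t_\ast=0$ equal $(m-1)!$ under the $\kappa$-independent condition $m+\sum\beta_i=\alpha$, but for the summands involving $t_0$ it does not recompute anything and instead quotes \eqref{eq: open potential derivative} together with the stabilization of the coordinates $v^{\mathrm{A}}_\gamma$ from \cite{BDbN} (their Lemma 4.2). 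Your version is self-contained, at the cost of redoing in closed form what that lemma already provides; either route is fine, and your uniform formula has the small extra benefit of exhibiting the stabilized limit $\sum_{\sum(\beta+1)\delta_\beta+k=\alpha}\frac{(|\delta|+k-1)!}{k!\prod\delta_\beta!}\,t_0^k\prod s_\beta^{\delta_\beta}$ explicitly.
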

\begin{proof}
    Note that the change of the variables above does not affect the variable $t_0$ and we can consider $F^o_{A_N}$ and $F^o_{A_M}$ as polynomials in $t_0$.

    For the summands of $F^o_{A_N}$ and $F^o_{A_M}$ involving the variable $t_0$ the statement follows from \eqref{eq: open potential derivative} and Lemma 4.2 of \cite{BDbN}.

    Consider the free terms in $t_0$. We have for any $\kappa \ge 1$
    \[
    \left.\frac{\d^{m+1} F^o_{A_\kappa}}{\d t_{\alpha} t_{\kappa+1-\beta_1}\ldots\d t_{\kappa+1-\beta_m}}\right|_{t_\ast=0}
    =
    \begin{cases}
    (m-1)! &\text{ if }\quad m + \sum_{i=1}^m \beta_i = \alpha,\\
    0,&\text{otherwise}.
    \end{cases}
    \]
    Both the value and the condition above do not depend on $\kappa$ what approves the statement.
\end{proof}

For any $\alpha,\beta \in \ZZ_{\ge 0}$ and $\gamma_\bullet \in \ZZ_{\ge 0}$ take $N = \alpha+\beta+1$ and denote $\bar x := N+1-x$. Set
\begin{align}\label{eq:RFder}
    R^{\mathrm{A}}_{\alpha,\beta; \gamma_1,\dots,\gamma_m} &= 
        \frac{1}{m!} \left.\frac{\p^{m+2} F_{A_N}^c}{\p t_\alpha\p t_\beta \p t_{\bar\gamma_1}\cdots \p t_{\bar\gamma_m}} \right|_{\bt = 0},
    \\
    R^{\mathrm{A},ext}_{\alpha; \gamma_1,\dots,\gamma_m} &= 
        \frac{1}{m!} \left.\frac{\p^{m+1} F_{A_N}^o}{\p t_\alpha \p t_{\bar\gamma_1}\cdots \p t_{\bar\gamma_m}} \right|_{\bt =  0}.
\end{align}
The proposition above and \cite[Theorem 4.1]{BDbN} assure that these coefficients are well--defined for all natural $\alpha,\beta,\gamma_\bullet$.
Note that $R^{\mathrm{A}}_{0,\beta; \gamma_1,\dots,\gamma_m} = 0$ by construction.

Consider the system of PDEs
\begin{align}
    \p_\alpha\p_\beta f & = \sum_{m \ge 1} \sum_{\gamma_\bullet} R^{\mathrm{A}}_{\alpha,\beta; \gamma_1,\dots,\gamma_m} \prod_{k=1}^m \p_1\p_{\gamma_k} f,
    \label{eq: A KP system}
    \\
    \p_{0}\p_\alpha f & = \sum_{m \ge 1} \sum_{\gamma_\bullet} R^{\mathrm{A},ext}_{\alpha; \gamma_1,\dots,\gamma_m} \prod_{k=1}^m \p_1\p_{\gamma_k} f.
    \label{eq: A extended system}
\end{align}

One notes immediately that these PDEs coincide with the PDEs presented in Introduction.

\begin{proposition}
    The system \eqref{eq: A KP system},\eqref{eq: A extended system} is compatible. The function $\tilde f = F_{A_N}^c + \int F^o_{A_N} dt_0$ satisfies this system for $\alpha+\beta \le N+1$.
\end{proposition}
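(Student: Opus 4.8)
The plan is to prove the two assertions in sequence, deriving compatibility of the system from open WDVV together with the closed-type compatibility already established in \cite{BDbN}. First I would recall the structure underlying \eqref{eq: A KP system}: by \cite[Theorem 4.1]{BDbN} the family $\{F^c_{A_N}\}$ generates a consistent system of PDEs of exactly the form \eqref{eq: A KP system}, so the closed part carries over verbatim; the content here is to show that adjoining the single new flow $\p_0$ via \eqref{eq: A extended system} preserves consistency. Consistency means that for the three variables $t_0,t_\alpha,t_\beta$ the mixed third derivatives $\p_0\p_\alpha\p_\beta f$, $\p_0\p_\beta\p_\alpha f$ and (using \eqref{eq: A KP system} to compute $\p_\alpha\p_\beta f$ first and then differentiating by $t_0$) agree. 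Differentiating \eqref{eq: A KP system} by $\p_0$ and using \eqref{eq: A extended system} to rewrite each $\p_0\p_1\p_{\gamma_k}f$, and symmetrically differentiating \eqref{eq: A extended system} by $\p_\beta$ and using \eqref{eq: A KP system}, one obtains two expressions that are formal power series in the ``initial data'' $\p_1\p_\gamma f$ with coefficients built from the $R^{\mathrm A}$ and $R^{\mathrm{A},ext}$. The required identity between these coefficients is precisely the Taylor expansion at $\bt=0$ of open WDVV \eqref{eq:open WDVV} for the pair $(F^c_{A_N},F^o_{A_N})$, with the index $0$ in \eqref{eq:open WDVV} playing the role of the new variable $t_0$ and the contractions with $\eta^{\mu\nu}$ reorganised by the stabilization indexing $\bar\gamma = N+1-\gamma$; I would also invoke the companion equation $\p_1\p_0 F^o_{A_N}=1$, $\p_1\p_\alpha F^o_{A_N}=0$ (stated just before Section~3) which guarantees that $\p_1\p_0 f$ and $\p_1\p_\alpha f$ behave as genuine free data and that no degenerate terms obstruct the recursion.

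For the second assertion I would verify directly that $\tilde f = F^c_{A_N} + \int F^o_{A_N}\,dt_0$ solves \eqref{eq: A KP system} and \eqref{eq: A extended system} in the stable range $\alpha+\beta\le N+1$. The key observation is that $\p_1\p_\gamma \tilde f = \p_1\p_\gamma F^c_{A_N}$ for $1\le\gamma\le N$ and $\p_1\p_0\tilde f = \p_1 F^o_{A_N}=t_0$ by the normalization, so that substituting $t_{\kappa+1-\gamma}=\p_1\p_\gamma\tilde f$ and $t_0=\p_1\p_0\tilde f$ into the right-hand sides reproduces, after using the stabilization identity of the Proposition above and \cite[Theorem 4.1]{BDbN} to pass from $A_\kappa$ to $A_N$, exactly $\p_\alpha\p_\beta F^c_{A_N}$ and $\p_\alpha F^o_{A_N}$ respectively; the latter equals $\p_0\p_\alpha\tilde f$. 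Here the restriction $\alpha+\beta\le N+1$ is what makes $\kappa=\alpha+\beta+1\le N$, so that all invoked stabilization identities apply. This reduces to checking a finite polynomial identity, term by term in the monomials of $F^c_{A_N}$ and $F^o_{A_N}$, using \eqref{eq: An essential coordinate via flat} and \eqref{eq: open potential derivative} to track the combinatorial coefficients.

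The main obstacle, I expect, is the bookkeeping in the first part: matching the $\eta$-contracted WDVV expression \eqref{eq:open WDVV} against the $\bar\gamma$-reindexed sums in \eqref{eq: A KP system}--\eqref{eq: A extended system}, while keeping careful track of the fact that the index $\kappa$ over which $F^c_{A_\kappa}$, $F^o_{A_\kappa}$ is taken \emph{varies} with the derivative order $m$ (each term in the sum over $m$ uses a different, larger Dubrovin--Frobenius manifold). One must argue that differentiating the recursion and resubstituting does not break the uniformity in $\kappa$; this is exactly where the stabilization Proposition above and \cite[Theorem 4.1]{BDbN} are essential, since they let all the $A_\kappa$-coefficients be computed inside one sufficiently large $A_N$ without ambiguity. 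Once that uniformity is in place, the remaining step is to recognise the resulting coefficient identity as the expanded open WDVV equation, which is then a direct (if lengthy) comparison rather than a genuinely new computation. I would finish by remarking that the compatibility statement thus makes the system \eqref{eq: A KP system}--\eqref{eq: A extended system} a well-defined hierarchy on $f=f(t_0,t_1,t_2,\dots)$, setting up the identification with dispersionless mKP in Theorem~\ref{theorem: dispersionless mKP}.
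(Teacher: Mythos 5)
Your strategy is the same as the paper's: for compatibility you cross-differentiate \eqref{eq: A KP system} and \eqref{eq: A extended system}, re-express the third derivatives through the system itself, and match the resulting coefficients against Taylor-expanded open WDVV (the paper performs exactly this computation in the D case inside the proof of Theorem~\ref{theorem: D type} and refers the A case to it), and for the second assertion you verify $\tilde f$ directly using $\p_1\p_\gamma \tilde f = t_{N+1-\gamma}$, $\p_1\p_0\tilde f = t_0$ together with the stabilization statements.

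There is, however, a genuine missing step, and it occurs in both halves. The identity produced by the cross-differentiation is \emph{not} literally open WDVV. For instance, matching the coefficients of $\p_1\p_1\p_0 f$ in $\p_0(\p_\alpha\p_\beta f)$ and $\p_\beta(\p_0\p_\alpha f)$ forces $\sum_{\delta}\,\p_\alpha\p_\beta\p_{\bar\delta}F^c_{A_\kappa}\cdot\p_\delta\p_0 F^o_{A_\kappa} \;=\; \p_\alpha\p_0 F^o_{A_\kappa}\cdot\p_\beta\p_0 F^o_{A_\kappa}$, whereas \eqref{eq:open WDVV} with $\gamma=0$ contains the extra summand $\p_\alpha\p_\beta F^o\cdot\p_0\p_0 F^o$ (and, for the coefficient of $\p_1\p_1\p_\sigma f$ with $\sigma\ge1$, the extra summand $\p_\alpha\p_\beta F^o\cdot\p_0\p_{\bar\sigma}F^o$). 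So the claim that the required identity ``is precisely'' open WDVV fails as stated; one must in addition prove that $\p_\alpha\p_\beta F^o_{A_\kappa}\equiv 0$ in the relevant range, which follows from the quasihomogeneity \eqref{eq: A open potential quasihomogeneity} (the quasi-degree of $\p_\alpha\p_\beta F^o_{A_\kappa}$ is $\alpha+\beta-\kappa-2<0$ while all variables have positive weights). This is exactly the step the paper spells out in the parallel D-type computation. The same vanishing is also what your second part silently uses: to check \eqref{eq: A KP system} on $\tilde f$ it is not enough that the substitution reproduces $\p_\alpha\p_\beta F^c_{A_N}$; you must also know that $\p_\alpha\p_\beta\tilde f=\p_\alpha\p_\beta F^c_{A_N}$, i.e.\ that the term $\int \p_\alpha\p_\beta F^o_{A_N}\,dt_0$ contributes nothing, which is again the statement $\p_\alpha\p_\beta F^o_{A_N}\equiv0$ for $\alpha+\beta\le N+1$. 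That vanishing, rather than the stabilization bookkeeping, is the true role of the hypothesis $\alpha+\beta\le N+1$ (note also the arithmetic slip: $\alpha+\beta\le N+1$ gives $\kappa=\alpha+\beta+1\le N+2$, not $\kappa\le N$). Once this quasihomogeneity vanishing is inserted at both places, your outline coincides with the paper's proof.
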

\begin{proof}
    Compatibility of the system written follows from open WDVV equation in the same way it was proved in \cite[Proposition 2.1]{BDbN}. The proof is parallel to the one given in D case in the proof of Theorem~\ref{theorem: D type}.
    
    Because of the special dependance of $F^c_{A_N}$ and $F^o_{A_N}$ on the variable $t_1$ we have
    \[
        \p_1\p_\gamma \tilde f = \begin{cases}
                           \p_1\p_\gamma F^c_{A_N} = t_{\overline \gamma}, \quad &\text{ if } \quad \gamma > 0,
                           \\
                           \p_1 F^o_{A_N} = t_0 , \quad &\text{ if } \quad \gamma = 0.
                          \end{cases}
    \]
    Let's show that applied to $\tilde f$ assumed both \eqref{eq: A KP system} and \eqref{eq: A extended system} just provide the series expansions of $\p_\alpha\p_\beta F^c_{A_N}$ and $\p_\alpha F^o_{A_N}$.
    
    It follows from \eqref{eq: A open potential quasihomogeneity} that $\p_\alpha \p_\beta F^o_{A_N} \equiv 0$ whenever $\alpha+\beta \le N+1$.
    The function $F^c_{A_N}$ does not depend on $t_0$ and therefore \eqref{eq: A KP system} holds for $\tilde f$.
    
    Similarly $\p_0\p_\alpha \tilde f = \p_\alpha F^o_{A_N}$ what approves \eqref{eq: A extended system} for $\tilde f$.    
\end{proof}

\subsection{The flows}
For any $f = f(t_0,t_1,t_2,\dots)$ denote $p_k := \p_1\p_k f$. 
The first flows read
\begin{align*}
    \p_0\p_1 f &= p_0,
    \\
    \p_0\p_2 f &= \frac{p_0^2}{2}+p_1,
    \\
    \p_0\p_3 f &= \frac{p_0^3}{3}+p_1 p_0+p_2,
    \\
    \p_0\p_4 f &= \frac{p_0^4}{4}+p_1 p_0^2+p_2 p_0+\frac{p_1^2}{2}+p_3.
\end{align*}

Denote by $P_{ij}(\gamma_1,\dots,\gamma_m)$ the number of all partitions of $i_1,\dots,i_m$ of $i$ and $j_1,\dots,j_m$ of $j$, s.t. $i_k + j_k = \gamma_k+1$ for all $k$. 
It was computed in \cite[Corollary 5.1]{BDbN} that the flows of \eqref{eq: A KP system} read:
\begin{equation}\label{eq: dKP flows expansion}
    \p_i\p_j f = \sum_{m \ge 1} \frac{(-1)^{m-1}}{m} \sum_{\gamma_1+\dots+\gamma_m = i+j-m} P_{ij}(\gamma_1,\dots,\gamma_m) \prod_{k=1}^m \p_1\p_{\gamma_k} f.
\end{equation}

\begin{proposition}\label{prop: A ext expanded}
    Equation \eqref{eq: A extended system} is equivalent to the following equality of the formal power series in $z$
    \[
        \sum_{\alpha \ge 1} \p_0\p_\alpha f \cdot z^\alpha = - \log \left[1 - \sum_{\alpha \ge 1} \p_1\p_\alpha f \cdot z^{\alpha+1} - z \cdot \p_0\p_1 f \right].
    \]
\end{proposition}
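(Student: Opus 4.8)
The plan is to expand both sides of the asserted identity as formal power series in $z$ and to match the coefficient of $z^\alpha$ for every $\alpha\ge 1$; since an equality of generating series holds if and only if it holds coefficientwise, and since \eqref{eq: A extended system} is precisely the list of those coefficient equations, this will yield the claimed equivalence.

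First I would rewrite the bracket inside the logarithm. Because $\p_0\p_1 f=\p_1\p_0 f$, the term $z\cdot\p_0\p_1 f$ is the $\gamma=0$ summand of $\sum_{\gamma\ge 0}(\p_1\p_\gamma f)\,z^{\gamma+1}$, so the bracket equals $1-w$ with $w:=\sum_{\gamma\ge 0}(\p_1\p_\gamma f)\,z^{\gamma+1}$. Then I would apply $-\log(1-w)=\sum_{m\ge 1}w^m/m$ and expand each power over ordered $m$-tuples,
\[
    w^m=\sum_{\gamma_1,\dots,\gamma_m\ge 0}\Big(\prod_{k=1}^m\p_1\p_{\gamma_k}f\Big)\,z^{\,m+\gamma_1+\dots+\gamma_m},
\]
so that the coefficient of $z^\alpha$ on the right-hand side of the asserted identity is
\[
    \sum_{m\ge 1}\frac1m\sum_{\substack{\gamma_1,\dots,\gamma_m\ge 0\\ m+\gamma_1+\dots+\gamma_m=\alpha}}\prod_{k=1}^m\p_1\p_{\gamma_k}f .
\]
For each fixed $\alpha$ this is a finite sum, so the rearrangement is legitimate.

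It then remains to check that the coefficients occurring in \eqref{eq: A extended system} match: I would show $R^{\mathrm{A},ext}_{\alpha;\gamma_1,\dots,\gamma_m}=1/m$ when $m+\sum_k\gamma_k=\alpha$ and $0$ otherwise. Here a vanishing index $\gamma_k$ is to be read as a $t_0$-derivative (consistently with the substitution $t_0=\p_1\p_0 f$ of the Introduction and with the sample flows listed above), and the value is independent of $N$ by the stabilization Proposition for $F^o_{A_N}$ proved above. If $m_0$ of the $\gamma_k$ vanish, then $\p^{m+1}F^o_{A_N}/(\p t_\alpha\,\p t_{\bar\gamma_1}\cdots\p t_{\bar\gamma_m})\big|_{\bt=0}$ has $m-m_0+1$ differentiations in $t_1,\dots,t_N$ and $m_0$ in $t_0$; using the defining relation for $F^o_{A_N}$ together with $N+2-\bar\gamma=\gamma+1$ (valid for $\gamma\ge 1$), it is nonzero precisely when $(N+2-\alpha)+\sum_{k:\gamma_k\ge 1}(\gamma_k+1)+m_0=N+2$, that is, when $m+\sum_k\gamma_k=\alpha$, and it then equals $\big((m-m_0+1)+m_0-2\big)!=(m-1)!$. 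Hence $R^{\mathrm{A},ext}_{\alpha;\gamma_1,\dots,\gamma_m}=(m-1)!/m!=1/m$ in that case, and since the sum in \eqref{eq: A extended system} also ranges over ordered tuples (so no multinomial factors intervene), substituting these values reproduces exactly the coefficient of $z^\alpha$ found above.

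I do not expect a genuine obstacle: the computation is routine once the bookkeeping for the index value $\gamma_k=0$ is pinned down. The point is that $t_{\bar 0}=t_{N+1}$ does not occur in $F^o_{A_N}$, so such an index must be interpreted as referring to $t_0$; this is exactly what makes the $\gamma_k=0$ terms contribute the nonzero amount $1/m$, in contrast with the closed coefficients $R^{\mathrm{A}}_{\alpha,\beta;\gamma_\bullet}$, which vanish as soon as some $\gamma_k=0$. One must simply verify that under this reading the condition $m+\sum_k\gamma_k=\alpha$ and the value $(m-1)!$ are unchanged from the case $m_0=0$, which is the calculation indicated above; as a sanity check, for $\alpha=1,2,3,4$ this recovers the first flows displayed before the statement.
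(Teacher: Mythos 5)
Your proof is correct. The key computation checks out: with the index $\gamma_k=0$ read as a $t_0$-derivative (which is indeed the reading forced by the substitution $t_0=\p_1\p_0 f$ in the Introduction and by the displayed first flows), the defining formula for the derivatives of $F^o_{A_N}$ at $t_*=0$ gives the value $(m-1)!$ exactly on the support $m+\sum_k\gamma_k=\alpha$, hence $R^{\mathrm{A},ext}_{\alpha;\gamma_1,\dots,\gamma_m}=1/m$ there and $0$ otherwise, independently of $N$ by the stabilization proposition; matching this against the coefficient of $z^\alpha$ in $-\log(1-w)$ with $w=\sum_{\gamma\ge 0}\p_1\p_\gamma f\,z^{\gamma+1}$ (using $\p_0\p_1 f=\p_1\p_0 f$, and ordered-tuple summation on both sides) gives the equivalence. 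This is, however, a different route from the paper's: the paper does not compute the coefficients $R^{\mathrm{A},ext}$ in closed form, but instead rewrites \eqref{eq: A extended system} via the expression \eqref{eq: open potential derivative} of $\p_0 F^o_{A_N}$ through the unfolding coordinates $v^{\mathrm{A}}_k$, separates the powers of $y_1=\p_0\p_1 f$, and then performs a two-step generating-function resummation with the auxiliary series $\Phi$ and its rescaled version $\Psi$, recombining $-\log\bigl(1-\tfrac{z}{1-\Psi}\bigr)-\log(1-\Psi)$ into the single logarithm. Your argument is more elementary and makes the logarithmic structure transparent from the outset (all coefficients are $1/m$), at the price of having to pin down the $\gamma_k=0$ bookkeeping in the definition \eqref{eq:RFder}, where literally $\bar 0=N+1$ would be meaningless; the paper's route avoids that notational issue because \eqref{eq: open potential derivative} already isolates the $t_0$-dependence, and it additionally exhibits the intermediate closed formula for the flows in terms of $\phi_{k,m}$, which is of some independent use.
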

\begin{proof}
 Introduce the notation $x_\alpha := \p_1\p_{\alpha-1}f$ and $y_\alpha := \p_0\p_\alpha f$. Let also $\phi_{k,m} := \sum x_{\gamma_1}\dots x_{\gamma_m}$ where the summation is taken over all $\gamma_i \ge 2$, s.t. $\gamma_1+\dots+\gamma_m=k$. 
Then \eqref{eq: A extended system} can be written using \eqref{eq: open potential derivative} as
\[
    y_\alpha = \sum_{m=1}^\alpha \frac{\phi_{\alpha,m}}{m} + \sum_{m=0}^{\alpha-1} \sum_{k=m+1}^\alpha \frac{y_1^{\alpha+1-k}}{(\alpha+1-k)!} \frac{(m+\alpha-k)!}{m!} \phi_{k-1,m}.
\]

For any polynomial $p = p(z)$ let $[z^k] p$ stand for the coefficient of $z^k$ in the polynomial assumed. 
For $\Phi := \sum_{k \ge 2} x_k z^k$ we have $\phi_{k,m} = [z^k] \Phi^m$.
Denote also $\Psi := \sum_{k \ge 2} x_k z^k / y_1^k$. Note that this power series is obtained from $\Phi$ by a formal rescaling of $z$ variable.

Equation above is equivalent to
\begin{align*}
    & y_\alpha - \frac{y_1^\alpha}{\alpha} = y_1^\alpha [z^\alpha] \left( \sum_{k=2}^{\alpha+1}\sum_{m=1}^{k-1} \frac{(m+\alpha-k)!}{(\alpha+1-k)! m!} z^{\alpha+1-k} \Psi^m \right)
    \\
    & \Leftrightarrow \ y_\alpha - \frac{y_1^\alpha}{\alpha} = y_1^\alpha [z^\alpha] \left( \sum_{k=2}^{\alpha} \frac{z^{\alpha+1-k}}{(\alpha+1-k)} \frac{1}{(1 - \Psi)^{\alpha+1-k}} + \sum_{m \ge 1} \frac{\Psi^m}{m} \right)
    \\
    & \Leftrightarrow \ 
    y_\alpha = y_1^\alpha [z^\alpha] \left( - \log(1 - \frac{z}{1-\Psi}) - \log(1-\Psi)\right).
\end{align*}
Rescaling formally $z$ on the both sides and collecting $y_\bullet$ into a power series the proposition follows.
\end{proof}

The following proposition shows that the dependance of $\p_1\p_j f$ on $\p_0\p_\bullet f$ is given via the Schur polynomials.
\begin{proposition}\label{prop: A ext to Toda}
    Let $f$ satisfy \eqref{eq: A extended system} then we have
    \begin{equation}
        \p_1\p_j f = \sum_{m \ge 1} \frac{(-1)^{m-1}}{m!} \sum_{\gamma_1+\dots+\gamma_m = j+1} \prod_{k=1}^m \p_0\p_{\gamma_k} f.
    \end{equation}
\end{proposition}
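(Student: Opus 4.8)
The plan is to invert the relation established in Proposition~\ref{prop: A ext expanded}. That proposition gives, with the abbreviations $P(z) := \sum_{\alpha \ge 1} \p_0\p_\alpha f \cdot z^\alpha$ and $Q(z) := z\cdot\p_0\p_1 f + \sum_{\alpha \ge 1} \p_1\p_\alpha f \cdot z^{\alpha+1}$, the identity $P(z) = -\log(1 - Q(z))$, equivalently $1 - Q(z) = e^{-P(z)}$, i.e. $Q(z) = 1 - e^{-P(z)}$. So the target formula is nothing but the extraction of the coefficients of $Q$ from this exponential relation, and the claimed right-hand side is exactly the coefficient of $z^{j+1}$ in $1 - e^{-P(z)}$.

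Concretely, I would first note that $\p_1\p_j f = [z^{j+1}] Q(z)$ for $j \ge 1$, since the $\p_0\p_1 f$ term contributes only to $[z^1]$. Then expand
\[
    1 - e^{-P(z)} = -\sum_{m \ge 1} \frac{(-1)^m}{m!} P(z)^m = \sum_{m \ge 1} \frac{(-1)^{m-1}}{m!} \left( \sum_{\alpha \ge 1} \p_0\p_\alpha f \cdot z^\alpha \right)^m,
\]
and extract the coefficient of $z^{j+1}$: expanding the $m$-th power produces $\sum_{\gamma_1+\dots+\gamma_m = j+1} \prod_{k=1}^m \p_0\p_{\gamma_k} f$, with each $\gamma_k \ge 1$ automatically. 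This yields precisely the asserted identity. The only point requiring a word of care is the $[z^1]$ coefficient, which is not part of the statement (the statement concerns $j \ge 1$, i.e. coefficients of $z^{j+1}$ with $j+1 \ge 2$); for $m=1$ and $j+1=1$ one would pick up $\p_0\p_1 f$, but since $j \ge 1$ this never occurs, so the restriction of $Q$ to $[z^{\ge 2}]$ matches the full series $\sum_{\alpha\ge 1}\p_1\p_\alpha f\, z^{\alpha+1}$ without ambiguity.

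I do not expect a genuine obstacle here: the proposition is a formal corollary of Proposition~\ref{prop: A ext expanded} obtained by taking the compositional/exponential inverse of the logarithm, and the combinatorics of $[z^{j+1}]P(z)^m$ is immediate. If one prefers to avoid invoking Proposition~\ref{prop: A ext expanded} directly, the same argument can be run starting from \eqref{eq: A extended system} and \eqref{eq: open potential derivative}, but routing through the closed-form power series identity $P = -\log(1-Q)$ is by far the cleanest route and I would present it that way. The main (very minor) thing to be careful about is bookkeeping of the index shift between $\p_1\p_j f$ and the power $z^{j+1}$, and the fact that the $\gamma_k$ range over $\ZZ_{\ge 1}$ so that the composition condition $\gamma_1 + \dots + \gamma_m = j+1$ has only finitely many solutions for each $m \le j+1$.
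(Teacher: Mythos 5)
Your proposal is correct and follows essentially the same route as the paper: exponentiate the identity of Proposition~\ref{prop: A ext expanded} to get $\sum_{\alpha\ge1}\p_1\p_\alpha f\,z^{\alpha+1} = 1 - e^{-\sum_{\alpha\ge1}\p_0\p_\alpha f\,z^{\alpha}} - z\,\p_0\p_1 f$, then expand the exponential and compare coefficients of $z^{j+1}$. Your extra bookkeeping about the $z^1$ term and the index shift matches what the paper leaves implicit, so there is no gap.
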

\begin{proof}
Assume the notation introduced in the proof of Proposition~\ref{prop: A ext expanded}.
Exponentiating the equality of Proposition~\ref{prop: A ext expanded} we get
\begin{align*}
    \Phi(z) = 1 - \exp(- \sum_{\alpha \ge 1} y_\alpha z^\alpha) - y_1z.
\end{align*}
Expanding the exponent in a power series and combining the coefficients of $z^\alpha$ on the both sides we get exactly the desired equation.
\end{proof}

Our goal now is to relate the system \eqref{eq: A KP system} and \eqref{eq: A extended system} to dispersionless mKP hierarchy.

\subsection{Dispersionless mKP hierarchy}
Consider the notation
\[
    D(z) := \sum_{k \ge 1} \frac{z^{-k}}{k} \frac{\p}{\p t_k}.
\]
For a function $f =f (t_0,t_1,t_2,\dots)$, the dispersionless limit of mKP hierarchy can be written in a Fay form as the following equality of the formal power series in $z^{-1}$, $w^{-1}$.
\begin{equation}\label{eq: mKP}
    e^{D(z)D(w) f} = \frac{z \cdot e^{-D(z)\p_0f} - w \cdot e^{-D(w)\p_0f}}{z-w},
\end{equation}
The coefficient of $w^{-1}$ on the both sides gives the following equality
\begin{equation}\label{eq: KP from mKP}
    z - D(z)\p_1 f = z e^{-D(z)\p_0 f}.
\end{equation}
Substituting it back to \eqref{eq: mKP} gives 
\begin{equation}\label{eq: KP}
    e^{D(z)D(w) f} = 1 - \frac{D(z)\p_1f - D(w)\p_1f}{z-w}.
\end{equation}
This equation is exactly the Fay form of the dispersionless limit of KP hierarchy of the function $f$ assumed as a function of $t_1,t_2,\dots$ with $t_0$ being fixed.

\begin{theorem}\label{theorem: dispersionless mKP}
    The system of equations \eqref{eq: A KP system} and \eqref{eq: A extended system} coincides with the dispersionless mKP hierarchy after the change of the variables $t_k \to t_k/k$, $k \ge 1$.
\end{theorem}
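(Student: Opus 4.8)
The plan is to show that the two formulations of the hierarchy—the ``combinatorial'' one given by \eqref{eq: A KP system}, \eqref{eq: A extended system} together with the explicit expansions \eqref{eq: dKP flows expansion} and Propositions~\ref{prop: A ext expanded}, \ref{prop: A ext to Toda}, and the ``generating function'' one given by the Fay-form equations \eqref{eq: mKP}, \eqref{eq: KP from mKP}, \eqref{eq: KP}—are literally the same system of PDEs once we replace $t_k$ by $t_k/k$. Under this rescaling the operator $D(z) = \sum_{k\ge 1} z^{-k}/k \cdot \p/\p t_k$ becomes $\widetilde D(z) = \sum_{k\ge 1} z^{-k}\p_k$, so all generating series below are ordinary (un-weighted) generating series in the variables $p_k = \p_1\p_k f$ and $q_k = \p_0\p_k f$. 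The whole proof is then a matching of generating-series identities; there is essentially no new analytic input, only bookkeeping.

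First I would treat the closed (KP) part. After the rescaling, \eqref{eq: KP} becomes, at the level of coefficients, the statement that $\p_i\p_j f$ is expressed through $\{p_\gamma\}$ by a universal formula; I would extract this formula by expanding $\log$ on the right-hand side of \eqref{eq: KP} and comparing with the combinatorial sum \eqref{eq: dKP flows expansion}. Concretely, writing $A(z) := \widetilde D(z)\p_1 f = \sum_{\gamma\ge 1} p_\gamma z^{-\gamma-1}\cdot z = \sum_{\gamma}p_\gamma z^{-\gamma}$ (the bookkeeping of the shift by $1$ must be done carefully—this is exactly the role of the condition $\gamma_1+\dots+\gamma_m = i+j-m$ and of $i_k+j_k=\gamma_k+1$ in the definition of $P_{ij}$), the right-hand side of \eqref{eq: KP} is $1 - (A(z)-A(w))/(z-w)$, and taking $\log$ and extracting the coefficient of $z^{-i}w^{-j}$ reproduces \eqref{eq: dKP flows expansion} with the partition count $P_{ij}$. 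Since \eqref{eq: dKP flows expansion} was already proved in \cite[Corollary 5.1]{BDbN} to be equivalent to \eqref{eq: A KP system}, this settles the closed part; alternatively one invokes \cite{BDbN} directly, as the dispersionless KP identification of \eqref{eq: A KP system} is stated there.

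Next I would treat the open (mKP-specific) part, i.e.\ the flows $\p_0\p_\alpha f$. The key observation is that \eqref{eq: KP from mKP} is, after rescaling, precisely
\[
    z - \sum_{\alpha\ge 1} p_\alpha z^{-\alpha+1} = z\exp\!\Big(-\sum_{k\ge 1} q_k z^{-k}\Big),
\]
which, after dividing by $z$ and substituting $z\mapsto z^{-1}$, becomes $1 - \sum_{\alpha\ge 1}p_\alpha z^{\alpha} = \exp(-\sum q_k z^k)$; separating the $\alpha=1$ term (note $p_1 = \p_1\p_1 f = \p_0\p_1 f = q_0$ by the first flow, so $p_1 z$ plays the role of the ``$z\cdot\p_0\p_1 f$'' term) this is exactly the exponentiated form of Proposition~\ref{prop: A ext expanded}, and its logarithmic/Schur-polynomial inverse is Proposition~\ref{prop: A ext to Toda}. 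Thus \eqref{eq: KP from mKP} $\Leftrightarrow$ \eqref{eq: A extended system}. Finally, \eqref{eq: mKP} itself is equivalent to the conjunction of its $w^{-1}$-coefficient \eqref{eq: KP from mKP} and the residual equation \eqref{eq: KP}—this is exactly the reduction carried out in the text right after \eqref{eq: mKP}—so \eqref{eq: mKP} $\Leftrightarrow$ \{\eqref{eq: A KP system} and \eqref{eq: A extended system}\}, which is the claim.

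The main obstacle I anticipate is not conceptual but notational: keeping the index shifts consistent across the four different encodings (the $P_{ij}$-partition normalization with $i_k+j_k=\gamma_k+1$; the $z^{\alpha}$ versus $z^{\alpha+1}$ conventions in Proposition~\ref{prop: A ext expanded}; the $z^{-k}/k$ weighting inside $D(z)$ that the rescaling $t_k\to t_k/k$ is designed to absorb; and the placement of the distinguished variable $t_0$). I would therefore organize the proof as: (i) rewrite $D(z)$ after rescaling and record the dictionary between the series $A(z)$, $\Phi(z)$, and the mKP objects; (ii) derive \eqref{eq: KP from mKP} $\Leftrightarrow$ Proposition~\ref{prop: A ext expanded} $\Leftrightarrow$ \eqref{eq: A extended system}; (iii) derive \eqref{eq: KP} $\Leftrightarrow$ \eqref{eq: dKP flows expansion} $\Leftrightarrow$ \eqref{eq: A KP system}; (iv) conclude using the already-established equivalence \eqref{eq: mKP} $\Leftrightarrow$ \{\eqref{eq: KP from mKP}, \eqref{eq: KP}\}. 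One should also remark that the combinatorial system is defined for all $\alpha,\beta$ (its coefficients being well-defined by the stabilization Proposition and \cite[Theorem 4.1]{BDbN}), matching the fact that the Fay equations encode infinitely many flows, so no truncation issue arises.
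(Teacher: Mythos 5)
Your overall architecture is exactly the paper's: split \eqref{eq: mKP} into its $w^{-1}$-coefficient \eqref{eq: KP from mKP} plus the residual \eqref{eq: KP}, identify \eqref{eq: KP} with \eqref{eq: A KP system} through \eqref{eq: dKP flows expansion} and \cite{BDbN}, and identify the $w^{-1}$-coefficient identity with \eqref{eq: A extended system} through Propositions~\ref{prop: A ext expanded}/\ref{prop: A ext to Toda}. However, the central bookkeeping in your step (ii) is wrong as written. After the rescaling $t_k\to t_k/k$ one has $D(z)\p_1 f=\sum_{\alpha\ge1}\p_1\p_\alpha f\,z^{-\alpha}$, not $\sum_{\alpha\ge1}p_\alpha z^{-\alpha+1}$, and the identification ``$p_1=\p_1\p_1 f=\p_0\p_1 f=q_0$'' is false: $\p_1\p_1 f\neq\p_0\p_1 f$ in general (the first flow only says $\p_0\p_1 f=\p_1\p_0 f$, a triviality), and in your own notation $\p_0\p_1 f$ is $q_1$, not $q_0$. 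Consequently the series you arrive at, $1-\sum_{\alpha\ge1}p_\alpha z^{\alpha}$, is not the exponentiated form of Proposition~\ref{prop: A ext expanded}: that proposition exponentiates to $1-\sum_{\alpha\ge1}\p_1\p_\alpha f\,z^{\alpha+1}-z\,\p_1\p_0 f=\exp\bigl(-\sum_{k\ge1}\p_0\p_k f\,z^k\bigr)$, so the distinguished linear term is $z\,\p_1\p_0 f$ (the $\gamma=0$ member of the family $\p_1\p_\gamma f$), and no reindexing of $p_1$ can produce it; your two index slips merely compensate each other optically.

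The fix brings you back to the paper's argument. Extracting the coefficient of $w^{-1}$ in \eqref{eq: mKP} honestly gives $z-D(z)\p_1 f-\p_1\p_0 f=z\,e^{-D(z)\p_0 f}$ (note the extra term $\p_1\p_0 f$ compared with the displayed \eqref{eq: KP from mKP}; your verification must either record this or work directly with the exact coefficient --- the term cancels upon substituting back into \eqref{eq: mKP}, so the reduction to \eqref{eq: KP} is unaffected). After rescaling, dividing by $z$ and inverting $z$, this identity reads $1-\sum_{\gamma\ge0}\p_1\p_\gamma f\,z^{\gamma+1}=\exp\bigl(-\sum_{k\ge1}\p_0\p_k f\,z^k\bigr)$, which is literally the exponentiated Proposition~\ref{prop: A ext expanded} (equivalently Proposition~\ref{prop: A ext to Toda}), the $\gamma=0$ term matching $z\,\p_1\p_0 f$. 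With that corrected, your steps (iii)--(iv) --- \eqref{eq: KP} $\Leftrightarrow$ \eqref{eq: dKP flows expansion} $\Leftrightarrow$ \eqref{eq: A KP system} via \cite{BDbN}, and \eqref{eq: mKP} $\Leftrightarrow$ the pair \{\eqref{eq: KP from mKP}, \eqref{eq: KP}\} --- reproduce the paper's proof, so the route is the same; only this one step needs to be redone correctly.
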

\begin{proof}
    After the change of the variables given, the system \eqref{eq: A KP system} coincides with with the Fay form of dispersionless KP by \eqref{eq: dKP flows expansion} (see also \cite{BDbN}). 
    It follows immediately from Proposition~\ref{prop: A ext to Toda} that \eqref{eq: KP from mKP} holds for a solution to \eqref{eq: A extended system}. Therefore substituting it to \eqref{eq: A KP system} on gets \eqref{eq: A extended system}.
\end{proof}

\subsection{h--deformation}
Full mKP hierarchy can be obtained from its dispersionless limit by the following procedure extending the approach of \cite{NZ}.

 Let $\tau_\hbar(s,\bt) := \tau(\hbar^{-1}s,\hbar^{-1} \bt)$. Assume it to be expanded by $\log\tau_\hbar = \sum_{g\ge 0}\hbar^{g-2}F_g$. Denote ${f := \hbar^2 \log\tau_\hbar}$.
 Consider also the operators
 \[
    \Delta(z):= \hbar^{-1}\left( e^{\hbar D(z)} - 1\right), \quad \Delta^o := \hbar^{-1}\left( e^{-\hbar \p_0} - 1\right).
 \]
 Note that $\Delta^o$ only involves the differentiations w.r.t. $t_0$ while $\Delta(z)$ only the differentiations w.r.t. $t_k$, $k \ge 1$.
 
 The $\hbar$--deformed version of \eqref{eq: mKP} above is the following equation
 \begin{equation}\label{eq: full mKP in Fay}
    e^{\Delta(z)\Delta(w) f} = \frac{z \cdot e^{-\Delta(z) \Delta^0 f} - w \cdot e^{-\Delta(w)\Delta^0f}}{z-w}.
 \end{equation}
 It coincides with the Fay form of mKP hierarchy (cf. \cite[Section 3.2]{T}).

 Consider $\p^\hbar_k$ defined by
 \begin{equation*}
    \p^\hbar_0 := \hbar^{-1}\left( e^{-\hbar \p_0} - 1\right), \quad \sum_{k \ge 1} \frac{z^{-k}}{k} \p^\hbar_k = \Delta(z)
 \end{equation*}
 Comparing the coefficients of $z^{-a}w^{-b}$ on the both sides of \eqref{eq: full mKP in Fay} one gets exactly
 \begin{align*}
    \p_\alpha^\hbar\p_\beta^\hbar f & = \sum_{m \ge 1} \sum_{\gamma_\bullet} R^{\mathrm{A}}_{\alpha,\beta; \gamma_1,\dots,\gamma_m} \prod_{k=1}^m \p_1\p_{\gamma_k}^\hbar f,
    \\
    \p_{0}^\hbar\p_\alpha^\hbar f & = \sum_{m \ge 1} \sum_{\gamma_\bullet} R^{\mathrm{A},ext}_{\alpha; \gamma_1,\dots,\gamma_m} \prod_{k=1}^m \p_1\p_{\gamma_k}^\hbar f.
\end{align*}

This procedure recovers full mKP hierarchy from the dispersionless hierarchy that we've constructed from the family of pairs $\lbrace F^c_{A_N}, F^o_{A_N} \rbrace_{N \ge 1}$.

\section{Dispersionless open--closed system of type D}
The following stabilization condition was proved in \cite{BDbN}.

For any $4 \le N < N$ we have
\begin{align*}
     &\left.\frac{\p^2 F^c_{D_N}(\bt)}{\p t_\alpha \p t_\beta}\right|_{\forall \gamma \; t_{\gamma} = s_{N-\gamma}; \ t_N = s_{-1}} = \left.\frac{\p^2 F^c_{D_M}}{\p t_\alpha \p t_\beta}\right|_{\forall \gamma \; t_{\gamma} = s_{M-\gamma}; \ t_M = s_{-1}},
         \qquad \forall \alpha+\beta < N,
     \\
     &\left.\frac{\p^2 F^c_{D_N}(\bt)}{\p t_N \p t_\beta}\right|_{\forall \gamma \; t_{\gamma} = s_{N-\gamma}; \ t_N = s_{-1}} = \left.\frac{\p^2 F^c_{D_M}(\bt)}{\p t_M \p t_\beta}\right|_{\forall \gamma \; t_{\gamma} = s_{M-\gamma}; \ t_M = s_{-1}}, \qquad \forall \beta < N,
\end{align*}
understood as an equality of polynomials in $s_\bullet$.

\begin{proposition}
    For any $4 \le N < M$ and $ 1\le \alpha < N$ we have
    \begin{align*}
      \frac{\p F_{D_N}^o}{\p t_\alpha} \mid_{t_{N-\beta} = s_\beta, \ t_{N} = s_{-1}} = \frac{\p F_{D_M}^o}{\p t_\alpha} \mid_{t_{M-\beta} = s_\beta, \ t_{M} = s_{-1}},
      \\
      \frac{\p F_{D_N}^o}{\p t_{N}} \mid_{t_{N-\beta} = s_\beta, \ t_{N} = s_{-1}} = \frac{\p F_{D_M}^o}{\p t_{M}} \mid_{t_{M-\beta} = s_\beta, \ t_{M} = s_{-1}},
    \end{align*}
    assumed as polynomials in $s_{-1},s_1,\dots,s_{M}$ and Laurent polynomial in $t_0$.
\end{proposition}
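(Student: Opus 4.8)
The plan is to mimic the proof of the type A stabilization Proposition, decomposing $F^o_{D_N}$ into the three structural pieces exhibited in its definition and checking stabilization of $\p_\alpha$ applied to each separately. Recall
\[
F^o_{D_N}=\frac{t_0^{2N-1}}{2^{N-2}(2N-1)(2N-2)} + \frac{t_N^2}{2t_0} + \sum_{k=1}^{N-1}\frac{v_k^{\mathrm D}\,t_0^{2k-1}}{2^{k-1}(2k-1)}.
\]
Under the substitution $t_{N-\beta}=s_\beta$, $t_N=s_{-1}$, the first summand is a pure constant in the $s$-variables times a power of $t_0$, and after differentiation by $t_\alpha$ (with $1\le\alpha<N$) it dies; for $\p_{t_N}$ it produces $t_0^{2N-1}/(2^{N-3}(2N-1)(2N-2))$ evaluated after setting $t_N=s_{-1}$, which is $s_{-1}$-independent — here one must check the normalising constants match between level $N$ and level $M$, but they don't appear at all since $\p_{t_N}$ of the first summand has no $t_N$, so this term also contributes $0$ to the $t_N$-derivative. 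The second summand $t_N^2/(2t_0)$ contributes to $\p_{t_N}$ the term $t_N/t_0\mapsto s_{-1}/t_0$, manifestly $N$-independent, and contributes nothing to $\p_\alpha$ for $\alpha<N$.

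The substance is therefore the third summand $\sum_{k=1}^{N-1} v_k^{\mathrm D} t_0^{2k-1}/(2^{k-1}(2k-1))$. First I would observe that since the $v_k^{\mathrm D}$ do not involve $t_0$ and the substitution leaves $t_0$ untouched, it suffices to prove, for each fixed $k\ge 1$ and each $\alpha$ with $1\le\alpha<N$,
\[
\left.\frac{\p v_k^{\mathrm D}}{\p t_\alpha}\right|_{t_{N-\beta}=s_\beta} \;=\;\left.\frac{\p v_k^{\mathrm D}}{\p t_\alpha}\right|_{t_{M-\beta}=s_\beta}
\]
as polynomials in $s_\bullet$ (the index shift sends $t_j$ at level $N$ to $s_{N-j}$, i.e.\ to the same $s$ as $t_{j+(M-N)}$ at level $M$). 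Using the explicit formula \eqref{eq: Dn essential coordinate via flat}, $\p v_k^{\mathrm D}/\p t_\alpha$ is again a sum over exponent vectors with the single linear constraint $\sum_{j}(N-j)\alpha_j = N-k$ (one power of $t_\alpha$ removed), with coefficient $(|\alpha|+2k-3)!/(2k-2)!$ times a multinomial factor — neither the coefficient nor, after re-indexing $t_j\mapsto s_{N-j}$, the constraint (which becomes $\sum_\ell \ell\,\beta_\ell = N-k - (N-\alpha)=\alpha-k$ in the $s$-variables) depends on $N$. This is exactly the phenomenon recorded in Lemma 4.2 of \cite{BDbN} for the closed case and for the $t_0$-bearing part of $F^o_{A_N}$; I would either invoke that lemma directly for the $v_k^{\mathrm D}$ or reproduce the one-line index bookkeeping.

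The main obstacle is purely bookkeeping: getting the two index conventions to match — the $s_\beta$ at level $N$ corresponds to $t_{N-\beta}$, so the "new" variables introduced in passing from $N$ to $M$ are the low-weight ones, and one must confirm that these new variables genuinely do not enter $\p v_k^{\mathrm D}/\p t_\alpha|_{s}$ for the relevant range (they would require an exponent vector violating $\sum \ell\beta_\ell=\alpha-k\ge 0$ together with $\alpha<N$). I expect no analytic difficulty; the only care needed is that the constant prefactors in the first two summands are handled by the observation that those summands contribute nothing to the derivatives in question (for $\p_\alpha$, $\alpha<N$) or contribute an $N$-independent expression (for $\p_{t_N}$, from $t_N^2/2t_0$ only). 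Assembling the three pieces gives both displayed equalities.
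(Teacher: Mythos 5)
Your proof is correct and follows essentially the same route as the paper: the $t_N$-derivative is read off from the explicit term $t_N^2/(2t_0)$, and the $t_\alpha$-derivative is reduced to the stabilization of the functions $v_k^{\mathrm D}$ of \eqref{eq: Dn essential coordinate via flat}, which you verify by the direct re-indexing computation (the constraint becomes $\sum_\ell \ell\beta_\ell=\alpha-k$ with $N$-independent coefficients, which also kills the extra terms $k\ge N$ appearing at level $M$), whereas the paper simply cites Theorem 4.9 of \cite{BDbN} for this point. The only small caveat is that Lemma 4.2 of \cite{BDbN}, which you suggest as an alternative citation, is the A-type statement rather than the D-type one needed here, but this is harmless since you carry out the bookkeeping yourself.
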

\begin{proof}
    Note that the change of the variables above does not affect the variable $t_0$ and we can consider $F^o_{D_N}$ and $F^o_{D_M}$ as Laurent polynomials in $t_0$.

    The second equality is straightforward by the explicit form of $F^o_{D_N}$.
    To show the first equality we only have to consider the summands of $F^o_{D_N}$ and $F^o_{D_M}$ involving the variable $t_0$. This follows from \eqref{eq: Dn essential coordinate via flat} and Theorem 4.9 of \cite{BDbN}.
\end{proof}

For $\alpha,\beta < N$, and $\gamma_k < N$ denote $N = \alpha+\beta -1$ and $\bar N := -1$, $\bar \gamma := N-\gamma$ for $1 \le \gamma < N$. Set
\begin{align*}
    & R^{(\mathrm{D},1)}_{\alpha,\beta; \bar\gamma_1,\dots,\bar\gamma_m} = \left.\frac{1}{m!} \frac{\p^{m+2} F^c_{D_N}}{\p_\alpha \p_\beta \p t_{\gamma_1}\cdot\dots\cdot \p t_{\gamma_m}} \right|_{\bt=0},
    \\
    & R^{(\mathrm{D},2)}_{\alpha; \bar\gamma_1,\dots,\bar\gamma_m} = \left.\frac{1}{m!} \frac{\p^{m+3} F^c_{D_N}}{\p t_N \p t_N \p t_\alpha \p t_{\gamma_1}\cdot\dots\cdot \p t_{\gamma_m}} \right|_{\bt=0},
    \\
    & R^{(\mathrm{D},ext,1)}_{\alpha; \bar\gamma_1,\dots,\bar\gamma_m} = 
    \frac{1}{m!} \left.\frac{\p^{m+1} F_{D_N}^o}{\p t_\alpha \p t_{\gamma_1}\cdots \p t_{\gamma_m}} \right|_{\bt =  0}.
\end{align*}
It was proved in \cite{BDbN} that $R^{(\mathrm{D},1)}_\bullet$ and $R^{(\mathrm{D},2)}_\bullet$ are well--defined. 
Note that despite $F^o_{D_N}$ being Laurent polynomial, according to the proposition above all coefficients $R^\bullet$ are well-defined.

The data above does not collect all the information of $F^o_{D_N}$ that is stabilized as $N$ grows. In particular we have $\p_N \p_N F^o_{D_N} = 1 / t_0$, for which $\bt = 0$ is not defined. However we still can add the corresponding flow~\eqref{eq: DN ext flow 2} to the system of PDEs we build.

For a function $f = f(t_{-1},t_0,t_1,t_2,\dots)$ consider the system of PDEs
\begin{align}
    \p_\alpha \p_\beta f &= \sum_{m \ge 1}\sum_{\gamma_1,\dots,\gamma_m} R^{(\mathrm{D},1)}_{\alpha,\beta; \gamma_1,\dots,\gamma_m} \prod_{k=1}^m \p_1 \p_{\gamma_a} f 
    \label{eq: DN flow A}
    \\
    \p_{-1} \p_\alpha f &= \p_{-1} \p_1 f \cdot \sum_{m\ge 1} \sum_{\gamma_1,\dots,\gamma_m} R^{(\mathrm{D},2)}_{\alpha; \gamma_1,\dots,\gamma_m} \prod_{k=1}^m \p_1 \p_{\gamma_a} f,
    \label{eq: DN flow B}
    \\
    \p_0 \p_\alpha f &= \sum_{m\ge 1} \sum_{\gamma_1,\dots,\gamma_m} R^{(\mathrm{D},ext)}_{\alpha; \gamma_1,\dots,\gamma_m} \prod_{k=1}^m \p_1 \p_{\gamma_a} f,
    \label{eq: DN ext flow 1}
    \\
    \p_0 \p_{-1} f & \cdot \p_1\p_0 f = \p_1\p_{-1}f,
    \label{eq: DN ext flow 2}
\end{align}
for all $\alpha,\beta \ge 2$ and $\gamma_\bullet \ge -1$.

We show that this system of equations coincides with the system given in Introduction after setting $t_{-1} = \bar t_1$.

\begin{theorem}\label{theorem: D type}~
    \begin{enumerate}
     \item The numbers $R^{(\mathrm{D},1)}_{\alpha,\beta; \gamma_1,\dots,\gamma_m}$ and $R^{(\mathrm{D},2)}_{\alpha,\beta; \gamma_1,\dots,\gamma_m}$ are series expansions of $ \p_\alpha \p_\beta F_{D_N}^c$ and $\p_N \p_\alpha F_{D_N}^c$ respectively written in the coordinates $t_{\bar 1}, \dots, t_{\bar N}$.
     \item The system \eqref{eq: DN flow A},\eqref{eq: DN flow B} and \eqref{eq: DN ext flow 1},\eqref{eq: DN ext flow 2} is compatible.
     \item 
     The function 
    \[
    \tilde f = F_{D_N}^c(t_{\overline 1},\dots,t_{\overline N}) + \int F^o_{D_N}(t_{\overline 1},\dots,t_{\overline N},t_0) dt_0
    \] 
    satisfies this system for $\alpha+\beta \le N$.
    \end{enumerate}
\end{theorem}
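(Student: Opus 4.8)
The plan is to treat the three claims in order, reducing each to the already-established type $A$ results and the stabilization statements for $F^c_{D_N}$ and $F^o_{D_N}$ quoted above.

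For part (1), I would first observe that the change of variables $t_{\bar\gamma} = s_\gamma$ (with $t_N = s_{-1}$) is exactly the one appearing in the stabilization propositions, so the coefficients $R^{(\mathrm{D},1)}_{\alpha,\beta;\gamma_\bullet}$ and $R^{(\mathrm{D},2)}_{\alpha;\gamma_\bullet}$ are genuinely $N$-independent for $N$ large enough (namely $N \ge \alpha+\beta$, resp. $N>\alpha$ with one more derivative), which is what lets me speak of ``the'' series. The content of (1) is then purely formal: for $N = \alpha+\beta-1$, Taylor-expanding $\p_\alpha\p_\beta F^c_{D_N}$ at $\bt = 0$ in the variables $t_{\bar 1},\dots,t_{\bar N}$ gives by definition the coefficients $m!\,R^{(\mathrm{D},1)}_{\alpha,\beta;\bar\gamma_1,\dots,\bar\gamma_m}$ on the monomial $\prod t_{\bar\gamma_k}$, after accounting for the symmetry factor $1/m!$; and similarly for $\p_N\p_\alpha F^c_{D_N}$, where the extra two $t_N$-derivatives in the definition of $R^{(\mathrm{D},2)}$ encode the special role of $t_N$ via \eqref{eq: DN^c expression} (the $t_N$-dependence of $F^c_{D_N}$ being only through $\tfrac12 t_1 t_N^2 + v_1^{\mathrm D}\tfrac{t_N^2}{2}$). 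So one reads off $\p_N\p_\alpha F^c_{D_N} = t_1\,\p_\alpha v_1^{\mathrm D} + \dots$ and matches term by term. This step is bookkeeping and I expect no obstacle.

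For part (2), compatibility, I would follow the strategy of \cite[Proposition 2.1]{BDbN}: interpret the right-hand sides of \eqref{eq: DN flow A}–\eqref{eq: DN ext flow 2} as defining a product structure on the span of $\p_0,\p_1,\p_2,\dots$ together with the extra direction $\p_{-1}$, and show that the cross-derivatives $\p_\mu(\p_\alpha\p_\beta f)$ agree however one brackets them. The associativity of this product is precisely what open WDVV \eqref{eq:open WDVV} (and ordinary WDVV for $F^c_{D_N}$) provides: the ``closed'' flows \eqref{eq: DN flow A}, \eqref{eq: DN flow B} are controlled by WDVV for $F^c_{D_N}$ (this is the $D_N$ analogue of the KP compatibility already used in type $A$), the ``open'' flow \eqref{eq: DN ext flow 1} by open WDVV, and \eqref{eq: DN ext flow 2} is the normalization $\p_0\p_{-1}f \cdot \p_1\p_0 f = \p_1\p_{-1}f$ that pins down the $t_0$-direction. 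Concretely I would differentiate \eqref{eq: DN flow A} by $\p_\mu$, substitute the flows again to re-express $\p_1\p_{\gamma_k}\p_\mu f$, and check that the resulting expression is symmetric in $(\alpha\beta)\leftrightarrow(\mu)$ — the symmetrization failure is an exact multiple of the open/closed WDVV left-hand-minus-right-hand side and hence vanishes. I expect this to be the main obstacle, chiefly because of the special variable $t_N$: the flow \eqref{eq: DN flow B} (the $\p_{-1}$-flow) has the extra prefactor $\p_{-1}\p_1 f$ and \eqref{eq: DN ext flow 2} is quadratic, so the compatibility check involving the $\p_{-1}$ and $\p_0$ directions is not a verbatim copy of the $A$-type argument and needs the $D_N$-specific structure \eqref{eq: DN^c expression} together with the quasihomogeneity \eqref{eq: D open potential quasihomogeneity}.

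For part (3), I would substitute $\tilde f = F^c_{D_N}(t_{\bar 1},\dots,t_{\bar N}) + \int F^o_{D_N}\,dt_0$ into each flow. The key computation, as in type $A$, is that the ``initial data'' derivatives take their tautological values: because of the special linear dependence of $F^c_{D_N}$ on $t_1$ one gets $\p_1\p_\gamma \tilde f = t_{\bar\gamma}$ for $\gamma > 0$, $\p_1\p_{-1}\tilde f$ from the $v_1^{\mathrm D} t_N^2/2$ term, and $\p_1\p_0\tilde f = \p_1 F^o_{D_N} = t_0$ up to normalization; feeding these into the right-hand sides of \eqref{eq: DN flow A}, \eqref{eq: DN flow B}, \eqref{eq: DN ext flow 1} reconstitutes the Taylor series of $\p_\alpha\p_\beta F^c_{D_N}$, $\p_N\p_\alpha F^c_{D_N}$, $\p_\alpha F^o_{D_N}$ by part (1), which is exactly the left-hand side since $F^c_{D_N}$ is $t_0$-independent and $\p_0\tilde f = F^o_{D_N}$. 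The range restriction $\alpha+\beta \le N$ is needed precisely so that $\kappa = \alpha+\beta-1 \le N-1 < N$ stays in the stabilized range and so that $\p_\alpha\p_\beta F^o_{D_N} \equiv 0$ there, which one reads off from the quasihomogeneity \eqref{eq: D open potential quasihomogeneity}; and \eqref{eq: DN ext flow 2} for $\tilde f$ reduces to the identity $\p_0\p_{-1}\tilde f \cdot \p_1\p_0\tilde f = \p_1\p_{-1}\tilde f$, which follows from the $t_N^2/t_0$ and $v_1^{\mathrm D} t_N^2/2$ terms once the normalizations are matched. This part is again essentially checking that the construction was rigged correctly and should present no difficulty beyond careful handling of the $t_N \leftrightarrow t_{-1}$ identification.
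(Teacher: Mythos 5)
Your plan matches the paper's proof in all three parts: (1) is read off from \eqref{eq: DN^c expression}, (2) quotes \cite{BDbN} for compatibility of the closed flows \eqref{eq: DN flow A}--\eqref{eq: DN flow B} and checks the mixed $\p_0$--closed compatibility by cross-differentiating, substituting the flows and invoking open WDVV, and (3) is the tautological substitution using the special $t_1$-dependence, with the range $\alpha+\beta\le N$ handled via quasihomogeneity. The one point to state more carefully is that open WDVV does not make the symmetrization failure vanish outright: since $F^c$ is $t_0$-independent, it identifies the failure with the residual term $\p_\alpha\p_\beta F^o_{D_\kappa}\cdot\p_0\p_0 F^o_{D_\kappa}$, and its vanishing in the stabilized range is precisely where the quasihomogeneity \eqref{eq: D open potential quasihomogeneity} you mention must be used.
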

\begin{proof}
    Part (1) follows immediately from \eqref{eq: DN^c expression}. 
    After this in order to show (2) we may use the PDEs given in Introduction. Compatibility of \eqref{eq: DN flow A} and \eqref{eq: DN flow B} was proved in \cite{BDbN}. 
    
    Let $\kappa = \alpha+\beta-1$. In the following formulae we use $\p_\bullet\p_1\p_\alpha f = \p_1\p_\bullet\p_\alpha f$. By the chain rule we have
    \begin{align*}
        \p_0 &(\p_\alpha\p_\beta f) = \sum_{\delta = 1}^\kappa \p_0 \p_1\p_{\bar \delta} f \cdot \frac{\p^3 F^c_{D_\kappa}}{\p t_\delta \p t_\alpha \p t_\beta}\mid_{ t_{\kappa-\gamma} = \p_1 \p_\gamma f, \ t_\kappa = \p_1 \bar\p_1 f},
        \\
        & = \p_1 \frac{\p F^o_{D_\kappa}}{\p t_{\bar \delta}}\cdot \frac{\p^3 F^c_{D_\kappa}}{\p t_\delta \p t_\alpha \p t_\beta}
        = \p_1\p_1\p_0 f \cdot \frac{\p F^o_{D_\kappa}}{\p t_{\bar \delta}\p t_0} \frac{\p^3 F^c_{D_\kappa}}{\p t_\delta \p t_\alpha \p t_\beta} 
        + \sum_{\sigma = 1}^\kappa \p_1\p_1\p_\sigma f \cdot \frac{\p F^o_{D_\kappa}}{\p t_{\bar \delta}\p t_{\bar \sigma}} \frac{\p^3 F^c_{D_\kappa}}{\p t_\delta \p t_\alpha \p t_\beta},
    \end{align*}
    where we skip the variable substitution on the second line in order to simplify the formulae.
    Similarly we have
    \begin{align*}
        \p_\beta \p_0 &\p_\alpha f = \sum_{\delta = 1}^\kappa \p_\beta \p_1 \p_{\bar \delta} f \cdot  \frac{\p^2 F^o_{D_\kappa}}{\p t_\delta \p t_\alpha} \mid_{ t_{\kappa-\gamma} = \p_1 \p_\gamma f, \ t_\kappa = \p_1 \bar\p_1 f, \ t_0 = \p_1 \p_0 f},
        \\
        & \qquad\qquad + \p_\beta \p_1 \p_{0} f \cdot  \frac{\p^2 F^o_{D_\kappa}}{\p t_0 \p t_\alpha} \mid_{ t_{\kappa-\gamma} = \p_1 \p_\gamma f, \ t_\kappa = \p_1 \bar\p_1 f, \ t_0 = \p_1 \p_0 f},
        \\
        &= \sum_{\delta = 1}^\kappa \p_1 \frac{\p^2 F^c_{D_\kappa}}{\p t_\beta \p t_{\bar \delta}} \cdot  \frac{\p^2 F^o_{D_\kappa}}{\p t_\delta \p t_\alpha}
        + \p_1 \frac{\p F^o_{D_\kappa}}{\p t_\beta} \cdot  \frac{\p^2 F^o_{D_\kappa}}{\p t_0 \p t_\alpha}
        \\
        &= \sum_{\sigma=1}^\kappa \p_1\p_1 \p_\sigma f \cdot  \sum_{\delta = 1}^\kappa  \left( \frac{\p^3 F^c_{D_\kappa}}{\p t_\beta \p t_{\bar \delta} \p_{\bar \sigma}}  \frac{\p^2 F^o_{D_\kappa}}{\p t_\delta \p t_\alpha}
        + \frac{\p^2 F^o_{D_\kappa}}{\p t_\beta \p t_{\bar\sigma}} \frac{\p^2 F^o_{D_\kappa}}{\p t_0 \p t_\alpha} \right)
        +\p_1\p_1\p_0 f \cdot \frac{\p^2 F^o_{D_\kappa}}{\p t_\beta \p t_0} \cdot  \frac{\p^2 F^o_{D_\kappa}}{\p t_0 \p t_\alpha}.
    \end{align*}
    The coefficient of $\p_1\p_1\p_0 f$ on the both sides is the same if and only if
    \begin{align*}
     \sum_{\delta=1}^\kappa\frac{\p F^o_{D_\kappa}}{\p t_0 \p t_{\bar \delta}} \frac{\p^3 F^c_{D_\kappa}}{\p t_\delta \p t_\alpha \p t_\beta} = \frac{\p^2 F^o_{D_\kappa}}{\p t_\beta \p t_0}  \frac{\p^2 F^o_{D_\kappa}}{\p t_0 \p t_\alpha}.
    \end{align*}
    The metric $\eta$ defined by $F^c_{D_\kappa}$ has only the following non--zero entries $\eta^{\kappa\kappa} = 1$ and $\eta^{\alpha\beta} = \delta^{\alpha+\beta,N}$. With our choice of $\bar\beta$ this is equivalent to $\eta^{\alpha\beta} = \delta^{\alpha,\bar\beta}$.
    
    By open WDVV equation the difference of LHS and RHS is equal to $\frac{\p^2 F^o_{D_\kappa}}{\p t_\alpha \p t_\beta} \frac{\p^2 F^o_{D_\kappa}}{\p t_0 \p t_0}$ that vanishes for $\alpha+\beta \le \kappa$ and $\alpha,\beta \neq \kappa$ due to \eqref{eq: D open potential quasihomogeneity}. 
    For any $\sigma \neq 0$ the coefficients of $\p_1\p_1\p_\sigma f$ of the expressions above coincide by the same reasoning.

    Consider now part (3).
    Because of the special dependance of $F^c_N$ and $F^o_N$ on the variable $t_1$ we have
    \[
        \p_1\p_{\overline\gamma} \tilde f = \begin{cases}
                           \p_1\p_{\overline\gamma} F^c_N = t_{\gamma}, \quad &\text{ if } \quad \overline\gamma > 0,
                           \\
                           \p_1\p_{\overline\gamma} F^c_N = t_N, \quad &\text{ if } \quad \overline\gamma = -1,
                           \\
                           \p_1 F^o_N = t_0 , \quad &\text{ if } \quad \overline\gamma = 0.
                          \end{cases}
    \]
    To show (3) let's show that applied to $\tilde f$ assumed both \eqref{eq: A KP system} and \eqref{eq: A extended system} just provide the series expansions of $\p_\alpha\p_\beta F^c_N$ and $\p_\alpha F^o_N$.
    
    It follows from \eqref{eq: A open potential quasihomogeneity} that $\p_\alpha \p_\beta F^o_{A_N} \equiv 0$ whenever $\alpha+\beta \le N+1$.
    The function $F^c_N$ does not depend on $t_0$ and therefore \eqref{eq: A KP system} holds for $\tilde f$.
    
    Similarly $\p_0\p_\alpha \tilde f = \p_\alpha F^o_N$ what approves \eqref{eq: A extended system} for $\tilde f$.
\end{proof}


\end{document}